\documentclass[sigconf]{acmart}
\AtBeginDocument{%
  }

\setcopyright{acmlicensed}
\copyrightyear{2018}
\acmYear{2018}
\acmDOI{XXXXXXX.XXXXXXX}
\acmConference[Conference acronym 'XX]{Make sure to enter the correct
  conference title from your rights confirmation email}{June 03--05,
  2018}{Woodstock, NY}
\acmISBN{978-1-4503-XXXX-X/2018/06}
\usepackage{multirow}
\usepackage{graphicx}
\usepackage{subcaption}

\begin{document}

\title{Differentiable Geometric Indexing for End-to-End Generative Retrieval}

\author{Xujing Wang}
\affiliation{%
  \institution{Xidian University}
  \city{Xi'an}
  \state{Shaanxi}
  \country{China}
}
\email{xjwong@stu.xidian.edu.cn}

\author{Yufeng Chen}
\authornote{Project leader}
\affiliation{%
  \institution{Alibaba}
  \city{Hangzhou}
  \state{Zhejiang}
  \country{China}
}
\email{chenyufeng.cyf@alibaba-inc.com}

\author{Boxuan Zhang}
\affiliation{%
  \institution{Alibaba}
  \city{Hangzhou}
  \state{Zhejiang}
  \country{China}
}
\email{boxuan.zbx@alibaba-inc.com}

\author{Jie Zhao}
\affiliation{%
  \institution{Xidian University}
  \city{Xi'an}
  \state{Shaanxi}
  \country{China}
}

\author{Chao Wei}
\affiliation{%
  \institution{Alibaba}
  \city{Hangzhou}
  \state{Zhejiang}
  \country{China}
}

\author{Cai Xu}
\affiliation{%
  \institution{Alibaba}
  \city{Hangzhou}
  \state{Zhejiang}
  \country{China}
}
\email{cxu@xidian.edu.cn}

\author{Ziyu Guan}
\authornote{Corresponding author}
\affiliation{%
  \institution{Xidian University}
  \city{Xi'an}
  \state{Shaanxi}
  \country{China}
}
\email{zyguan@xidian.edu.cn}

\author{Wei Zhao}
\affiliation{%
  \institution{Xidian University}
  \city{Xi'an}
  \state{Shaanxi}
  \country{China}
}

\author{Weiru Zhang}
\affiliation{%
  \institution{Alibaba}
  \city{Hangzhou}
  \state{Zhejiang}
  \country{China}
}

\author{Xiaoyi Zeng}
\affiliation{%
  \institution{Alibaba}
  \city{Hangzhou}
  \state{Zhejiang}
  \country{China}
}
\renewcommand{\shortauthors}{Wang et al.}

\begin{abstract}

Generative Retrieval (GR) has emerged as a promising paradigm to unify indexing and search within a single probabilistic framework. However, existing approaches suffer from two intrinsic conflicts: (1) an Optimization Blockage, where the non-differentiable nature of discrete indexing creates a gradient blockage, decoupling index construction from the downstream retrieval objective; and (2) a Geometric Conflict, where standard unnormalized inner-product objectives induce norm-inflation instability, causing popular ``hub'' items to geometrically overshadow relevant long-tail items.

To systematically resolve these misalignments, we propose \textbf{Differentiable Geometric Indexing (DGI)}. 
First, to bridge the optimization gap, DGI enforces Operational Unification. It employs Soft Teacher Forcing via Gumbel-Softmax to establish a fully differentiable pathway, combined with Symmetric Weight Sharing to effectively align the quantizer's indexing space with the retriever's decoding space.
Second, to restore geometric fidelity, DGI introduces Isotropic Geometric Optimization. We replace inner-product logits with scaled cosine similarity on the unit hypersphere to effectively decouple popularity bias from semantic relevance.

Extensive experiments on large-scale industry search datasets and online e-commerce platform demonstrate that DGI outperforms competitive sparse, dense, and generative baselines. Notably, DGI exhibits superior robustness in long-tail scenarios, validating the necessity of harmonizing structural differentiability with geometric isotropy.

\end{abstract}

\begin{CCSXML}
<ccs2012>
   <concept>
       <concept_id>10002951</concept_id>
       <concept_desc>Information systems</concept_desc>
       <concept_significance>500</concept_significance>
       </concept>
   <concept>
       <concept_id>10002951.10003317</concept_id>
       <concept_desc>Information systems~Information retrieval</concept_desc>
       <concept_significance>500</concept_significance>
       </concept>
   <concept>
       <concept_id>10002951.10003317.10003338</concept_id>
       <concept_desc>Information systems~Retrieval models and ranking</concept_desc>
       <concept_significance>500</concept_significance>
       </concept>
   <concept>
       <concept_id>10002951.10003317.10003338.10010403</concept_id>
       <concept_desc>Information systems~Novelty in information retrieval</concept_desc>
       <concept_significance>300</concept_significance>
       </concept>
 </ccs2012>
\end{CCSXML}

\ccsdesc[500]{Information systems}
\ccsdesc[500]{Information systems~Information retrieval}
\ccsdesc[500]{Information systems~Retrieval models and ranking}
\ccsdesc[300]{Information systems~Novelty in information retrieval}


\maketitle
\begin{figure}[hbpt]
  \centering
  \includegraphics[width=\linewidth]{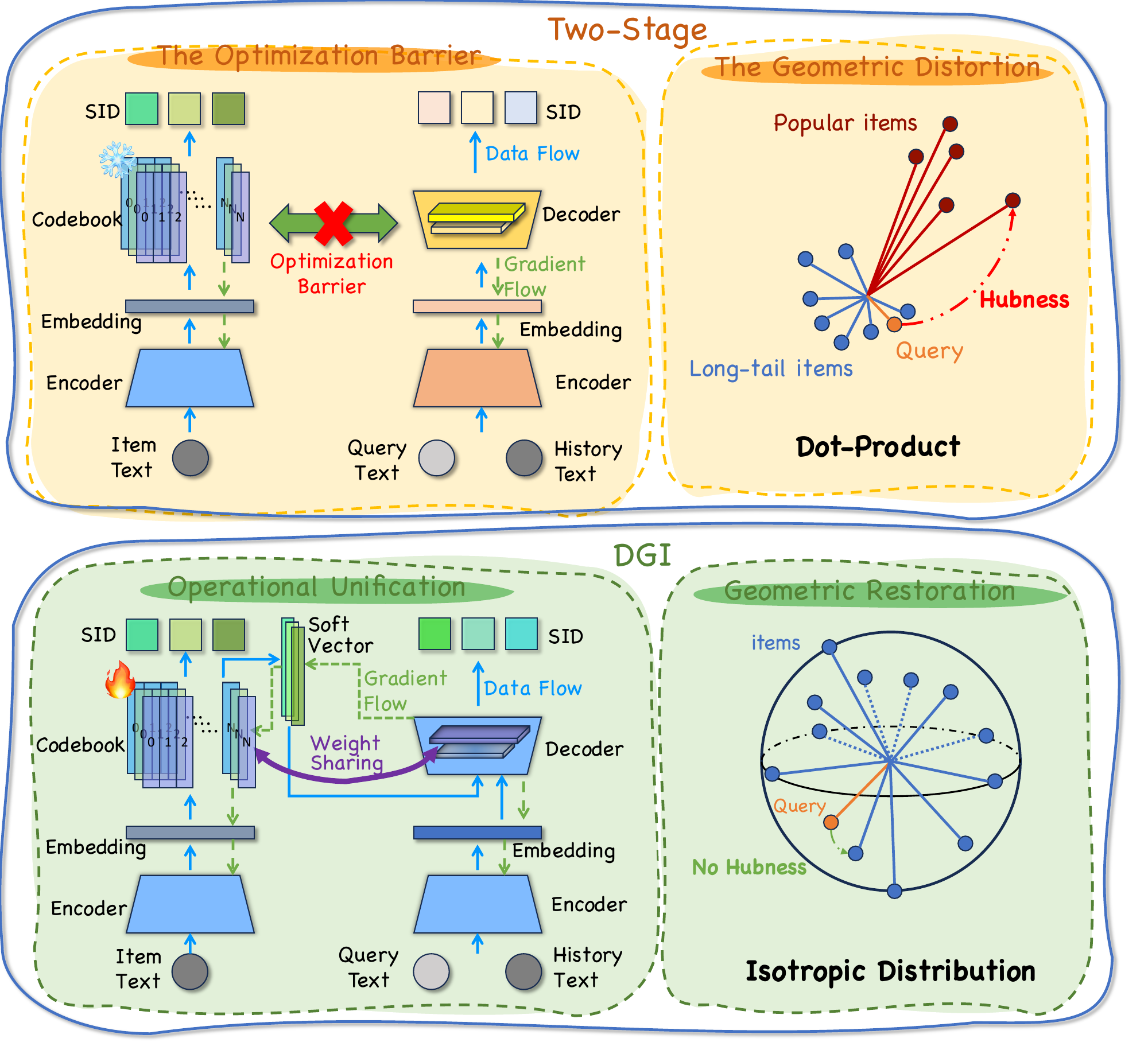}
  \caption{Illustration of the Structural Mismatch and Geometric Anisotropy in existing GR frameworks compared to our DGI.}
  \label{fig1}
\end{figure}

\section{Introduction}

Large-scale industrial search systems serve as the information backbone of modern platforms, bridging user intent with massive item inventories. Traditional search systems typically follow a multi-stage ``retrieve-then-rank'' architecture. In the initial retrieval stage, a sparse \cite{karpukhin2020dense,huang2013learning,ni2022sentence} or dense \cite{robertson2009probabilistic,nogueira2019doc2query} retriever identifies a coarse candidate set from a massive corpus, which is subsequently refined by a ranking model. However, this pipeline suffers from a fundamental structural fragmentation. Specifically, the retrieval process relies on an external index (e.g., inverted lists or ANN graphs) for efficient lookup. This typically follows a disjoint two-stage paradigm: the representation model is trained first, and the index is built subsequently using the frozen item embeddings. As a result, the indexing topology is isolated from the model's gradient flow, creating an optimization gap.
Consequently, it is impossible to jointly optimize the retrieval and ranking stages end-to-end, as this non-learnable index blocks the propagation of optimization signals\cite{khandagale2025interactrank, zheng2024full}.

Recent advances in Large Language Models (LLMs) have motivated a pathbreaking attempt toward Generative Retrieval (GR) \cite{rare2025,chen2025unisearch}, which casts retrieval as a unified sequence generation problem by predicting item IDs autoregressively.
Typically, GR falls into the conditional generation framework and directly optimizes the probability of generating a relevant item's identifier ($ID$) conditioned on the input query $q$ and an optional user’s interaction history $h$ (i.e., maximizing $P(ID|q,h)$).
A large body of work has been proposed to construct semantically rich IDs of candidate items.
Pioneering works, like DSI \cite{tay2022transformer} and NCI\cite{wang2022neural}, assign structured IDs to candidate items using manually designed rules (e.g., hierarchical clustering). 

Though simple, these approaches suffer from a static vocabulary of IDs, which means that adding new items requires retraining the entire retrieval model.
Moreover, manually defined heuristic rules for ID generation\cite{rare2025,jdgr2025} struggle to comprehensively capture the semantic information embedded in items.
Building on this, state-of-the-art methods like TIGER \cite{tiger2023} and OneSearch \cite{onesearch2025} have converged toward Learnable Semantic Identifier (SID). By employing residual quantization (e.g., RQ-VAE \cite{rqvae} and RQ-Kmeans\cite{luo2025qarm}), discrete codes (identifier strings) can be directly derived from item contents. 
SID allows for capturing deep inter-item relationships, offering the potential to generalize to unseen items based on semantic similarity.

Despite these advances, we argue that current GR remain constrained by two fundamental issues that limit their potential:

\textbf{Optimization Blockage}: Prevailing works ~\cite{tiger2023,onesearch2025,genius2025} typically follow a two-stage paradigm. The indexer (i.e., the item encoder and quantizer) is independently trained for semantic preservation of items. And then it is freezed during the retriever training, as the discrete SIDs creates a fundamental gradient blockage between the optimizations of retriever and indexer.
Consequently, the retrieval loss cannot backpropagate to the indexer, leaving the SID suboptimal to the central retrieval task and preventing the retrieval system from distinguishing confusable and hard negative items. 
Though some efforts achieve joint-training~\cite{chen2025unisearch,etegrec2024} relying on surrogate estimators (e.g., STE \cite{bengio2013estimating}), they yield biased gradients.
Specifically, the STE directly copies the gradient of the variable after quantization to the gradient of the variable before quantization. Since the non-differentiable argmax function is used during the quantization process, this is essentially a biased gradient estimation.

\textbf{Geometric Conflict:} Ideally, the retrieval should ensure that the retrievability of an item is driven by relevance, without being distorted by popularity-induced bias~\cite{he2009learning,radovanovic2010hubs}. Real-world industrial retrieval scenarios (e.g., e-commerce and web search), however, are inherently characterized by severe long-tail distributions.
And the standard cross-entropy optimization on long-tail distribution data with unnormalized logits typically leads to norm-dominated ranking. 
Popular items acquire inflated norms that disproportionately influence the similarity score (as shown in~\ref{hubness}). 
This creates a ``hubness'' effect where high-frequency items geometrically overshadow semantically relevant long-tail items, even when they better match the query intent.

To systematically resolve these issues, we propose \textbf{Differentiable Geometric Indexing (DGI)}, a holistic framework grounded in two design pillars corresponding to the issues identified above: (1) \textbf{Operational Unification:} To bridge the \textit{Optimization Blockage}, we enable fully differentiable training via Soft Teacher Forcing with Gumbel-Softmax and Symmetric Weight Sharing. This strategy strictly aligns the quantizer's codebook space with the decoder's retrieval space, ensuring the index structure is dynamically reshaped by the retrieval objective.
(2) \textbf{Isotropic Geometric Optimization:} To resolve the \textit{Geometric Conflict}, we introduce a Riemannian perspective by replacing the dot-product with a Scaled Cosine objective on the unit hypersphere. This constraint explicitly decouples magnitude effects (often correlated with popularity) from semantic relevance (angle). By preventing norm inflation, we ensure that popular items do not suppress the visibility of long-tail items purely due to geometric artifacts, allowing for a more balanced and effective retrieval.

In summary, our key contributions are threefold:

First, we systematically identify two fundamental bottlenecks in existing Generative Retrieval paradigms: the \textit{Optimization Blockage}, where non-differentiable indexing blocks gradient propagation, and the \textit{Geometric Conflict}, where unnormalized dot products induce norm-dominated Hubness. 

Second, we propose a novel framework named Differentiable Geometric Indexing (DGI). By synergizing Operational Unification (via Soft Gradient Flow and Weight Sharing) with Isotropic Geometric Optimization, DGI bridges the discrete optimization gap and enforces strict geometric alignment with query intent.
   
Third, we validate DGI through extensive offline experiments and show significant performance gains compared with SOTA baselines. Additionally, a 7-day online A/B test in a commercial search advertising system yields a +1.27\% CTR increase and a +1.11\% RPM lift ($p < 0.001$), demonstrating its effectiveness in production environments.

\section{Related Works}
We review the evolution of retrieval paradigms and the geometric properties of representation learning.

\subsection{Retrieval Paradigms: From Static to Generative}
\textbf{Traditional Retrieval.} 
For decades, retrieval has followed an ``Index-then-Retrieve'' pipeline. Sparse Retrieval methods (e.g., BM25 \cite{robertson2009probabilistic}, DocT5Query \cite{nogueira2019doc2query}) utilize inverted indices for lexical matching, while Dense Retrieval (e.g., DSSM\cite{huang2013learning}, DPR \cite{karpukhin2020dense} and Sentence-T5 \cite{ni2022sentence}) employs dual-encoders for semantic matching via ANN search. 
Despite their effectiveness, both of them suffer from a \textit{separation constraint}: the index structure is static and decoupled from the final retrieval objective, limiting end-to-end optimization \cite{tay2022transformer, zheng2024full}.

\textbf{Generative Retrieval and Semantic Identifiers.}
Generative Retrieval (GR) unifies this process by reformulating retrieval as an autoregressive sequence-to-sequence generation task\cite{penha2025semantic,wu2024generative,chen-etal-2023-understanding,zhuang2022bridging}. Early approaches like DSI \cite{tay2022transformer} and NCI \cite{wang2022neural} utilized Atomic Identifiers, but suffered from static vocabularies and limited generalization.
To address this, the field has converged toward Semantic Identifiers (SIDs), which derive structured codes from item content. Existing SID methods can be categorized by their optimization strategy:

 \textit{Two-Stage Optimization:} Methods like TIGER \cite{tiger2023} and OneSearch \cite{onesearch2025} adopt an ``Index-then-Freeze'' strategy. They pre-train a quantizer to minimize reconstruction error and then freeze it. While stable, this renders the index oblivious to the downstream retrieval task.

\textit{Joint Optimization:} Recent works attempt to align indexing and retrieval. UniSearch \cite{chen2025unisearch} and ETEGRec \cite{etegrec2024} employ Joint Training, typically using the Straight-Through Estimator (STE) to bypass the non-differentiable quantization step. Others explore model-enhanced differentiable indexing\cite{tang2023semantic,zhang2023model,yang-etal-2023-auto,liu2025towards} or integrate ranking objectives \cite{li2024learning,li2024distillation,zhang2026conversational,lu2025dogr}. 

\textbf{The Gap.} While Joint Optimization improves alignment, STE-based methods rely on identity-function approximations, which can lead to biased gradients and optimization instability \cite{bengio2013estimating}. In contrast, DGI utilizes Gumbel-Softmax\cite{jang2016categorical,maddison2016concrete} relaxation with soft conditioning to enable end-to-end gradient flow without relying on STE.

\subsection{Geometry in Representation Learning}
Beyond optimization pathways, the geometry of the embedding space fundamentally constrains retrieval performance. 
In high-dimensional spaces with inner-product logits, models are prone to the Hubness Problem \cite{radovanovic2010hubs}. Frequent items tend to acquire excessively large norms to minimize softmax loss, leading to popularity-induced norm inflation \cite{zhu2025addressing}. This phenomenon distorts the semantic space, masking relevant long-tail items.
While spherical embeddings and margin-based losses have been standard in face recognition (e.g., ArcFace \cite{deng2019arcface}, CosFace \cite{wang2018cosface}) and contrastive learning (e.g., Alignment and Uniformity \cite{wang2020understanding}), their application to the \textit{discrete quantization} step in Generative Retrieval remains underexplored, particularly regarding the alignment between discrete codebook learning and continuous retrieval logits.
We adopt a Riemannian perspective to enforce geometric isotropy, aiming to mitigate the magnitude-semantic entanglement inherent in standard GR models.
\section{Preliminaries}

In this section, we formally define the generative retrieval task and establish the theoretical foundations of optimization on Riemannian manifolds, which serve as the geometric basis for our DGI framework.

\subsection{Generative Retrieval Task}

Let $\mathcal{Q} = \{q_1, \dots, q_M\}$ be a set of user queries and $\mathcal{I} = \{i_1, \dots, i_N\}$ be the corpus of items.
In a personalized retrieval scenario, each request consists of a current query $q \in \mathcal{Q}$ and a user interaction history $H = (h_1, \dots, h_L)$, where $h_k \in \mathcal{I}$ denotes historically interacted items.
The goal of Generative Retrieval (GR) is to learn a parameterized probabilistic model $P_{\theta}(i | q, H)$ that directly maps the context $(q, H)$ to the target item identifier.

Unlike traditional indexing which maps items to arbitrary integers, GR represents each item $i$ as a sequence of discrete semantic tokens $\mathbf{c}^{(i)} = (c_1, c_2, \dots, c_m)$, where each token $c_j$ is drawn from a codebook $\mathcal{V}$.
The retrieval probability is factorized autoregressively:
\begin{equation}
    P_{\theta}(i | q, H) = P_{\theta}(\mathbf{c}^{(i)} | q, H) = \prod_{j=1}^m P_{\theta}(c_j | q, H, \mathbf{c}_{<j})
\end{equation}
Our objective is to \textbf{jointly optimize} the parameters of two coupled modules:
(1) The Tokenization Network, denoted as $\mathcal{T}_{\phi}: \mathbb{R}^d \to \mathcal{V}^m$, which discretizes item representations into semantic codes;
and (2) The Generative Retriever parameterized by $\theta$, which learns to predict these codes.
By unifying these two objectives, we ensure that the index structure evolves dynamically to maximize retrievability.

\subsection{Optimization on Riemannian Manifolds}
To address the geometric anisotropy prevalent in high-dimensional Euclidean spaces, we ground our framework in Riemannian geometry.
We model the embedding space not as a flat vector space $\mathbb{R}^d$, but as a differentiable manifold $\mathcal{M}$. Specifically, we focus on the Unit Hypersphere $\mathbb{S}^{d-1}$, defined as:
\begin{equation}
    \mathbb{S}^{d-1} := \{\mathbf{x} \in \mathbb{R}^d : \|\mathbf{x}\|_2 = 1\}
\end{equation}
Optimization on this manifold requires redefining gradients and update rules to ensure parameters remain on the surface.

\textbf{Tangent Space and Riemannian Gradient.} For any point $\mathbf{x} \in \mathbb{S}^{d-1}$, the tangent space $T_{\mathbf{x}}\mathbb{S}^{d-1}$ is the vector space approximating the manifold locally at $\mathbf{x}$:
\begin{equation}
    T_{\mathbf{x}}\mathbb{S}^{d-1} := \{\mathbf{z} \in \mathbb{R}^d : \mathbf{x}^\top \mathbf{z} = 0\}
\end{equation}
Let $f: \mathbb{S}^{d-1} \to \mathbb{R}$ be a differentiable objective function. The \textit{Riemannian gradient} $\operatorname{grad} f(\mathbf{x})$ is the direction of steepest ascent within the tangent space. It is obtained by projecting the standard Euclidean gradient $\nabla f(\mathbf{x})$ onto $T_{\mathbf{x}}\mathbb{S}^{d-1}$:
\begin{equation}
    \label{eq:riem_grad}
    \operatorname{grad} f(\mathbf{x}) := (\mathbf{I} - \mathbf{x}\mathbf{x}^\top)\nabla f(\mathbf{x})
\end{equation}
This projection removes the radial component of the gradient (which changes the norm), ensuring that the update direction is purely tangential (changing only the direction).

\textbf{Retraction.} Since a step along the tangent vector leaves the manifold, a mapping is required to project the updated point back onto $\mathbb{S}^{d-1}$. This operation is formally known as a \textit{Retraction} $R_{\mathbf{x}}: T_{\mathbf{x}}\mathbb{S}^{d-1} \to \mathbb{S}^{d-1}$. A computationally efficient retraction for the sphere is simple normalization:
\begin{equation}
    \label{eq:retraction}
    R_{\mathbf{x}}(\mathbf{v}) := \frac{\mathbf{x} + \mathbf{v}}{\|\mathbf{x} + \mathbf{v}\|_2}
\end{equation}
In Section \ref{Isotropic}, we will demonstrate that our proposed Scaled Cosine objective implicitly implements this Riemannian optimization framework, thereby strictly enforcing geometric isotropy.

\begin{figure*}[t]
  \centering
  \includegraphics[width=\linewidth]{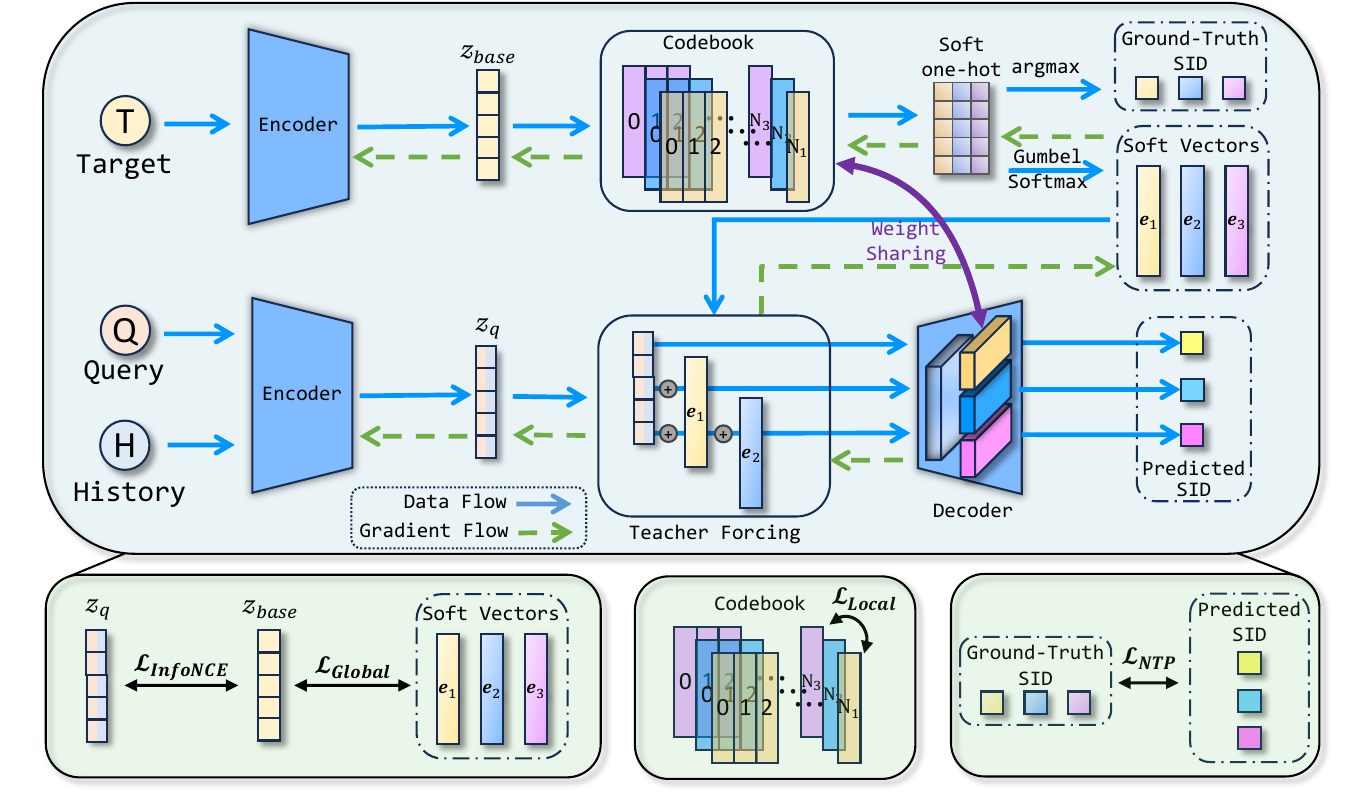}
  \caption{\textbf{Schematic Overview of the Differentiable Geometric Indexing (DGI) Framework.} 
Unlike two-stage methods that block optimization at the discrete indexing step, DGI establishes a fully differentiable pathway.
(1) Operational Unification: During training, we employ Gumbel-Softmax to generate \textit{soft quantized vectors}. These are fed directly into the decoder via Soft Teacher Forcing, enabling gradients (visualized as green dashed lines) to flow from the $\mathcal{L}_{NTP}$ loss back to the item encoder. We also enforce Symmetric Weight Sharing between the quantization codebooks and the decoder's prediction head to ensure a unified representation space.
(2) Geometric Optimization: The entire framework is optimized under spherical constraints (Scaled Cosine) to mitigate hubness.}
  \label{figm}
\end{figure*}

\section{Methodology}

Built upon the DGI framework defined in the Preliminaries, we now detail the specific instantiation of our End-to-End Generative Retrieval model. We first present the differentiable architecture that unifies indexing and retrieval, and then elaborate on the geometric optimization designed to resolve the Hubness problem.

\subsection{End-to-End Differentiable Architecture}
As illustrated in Figure \ref{figm}, DGI unifies indexing and retrieval into a single differentiable flow. 

\subsubsection{Operational Unification via Soft Gradient Flow}
Standard quantization employs the non-differentiable $\operatorname{argmax}$ operator, which necessitates stopping gradients during training. We replace this with a continuous relaxation to establish a valid gradient pathway.
At each quantization depth $j$, given the residual vector $\mathbf{r}_{j-1}$, we compute the selection probability over codebook $\mathbf{E}_j \in \mathbb{R}^{K \times d}$ using the Gumbel-Softmax reparameterization:
\begin{equation}
    p_{j,k} = \frac{\exp((s_{j,k} + g_{j,k}) / \tau_{GS})}{\sum_{l=1}^K \exp((s_{j,l} + g_{j,l}) / \tau_{GS})}
\end{equation}
where $s_{j,k}$ is the logit score, $g_{j,k} \sim \operatorname{Gumbel}(0, 1)$ is independent noise, and $\tau_{GS}$ is the temperature.
Crucially, instead of a discrete index, we compute a \textit{soft vector} $\tilde{\mathbf{z}}_j$ as the expectation:
\begin{equation}
    \tilde{\mathbf{z}}_j = \sum_{k=1}^K p_{j,k} \mathbf{e}_{j,k}, \quad \mathbf{r}_j = \mathbf{r}_{j-1} - \tilde{\mathbf{z}}_j
\end{equation}
By feeding these soft vectors $\tilde{\mathbf{z}}_j$ into the decoder during training, we enable \textbf{Soft Gradient Flow}. The gradient of the retrieval loss $\mathcal{L}_{NTP}$ with respect to the item encoder output $\mathbf{z}_{base}$ can now be computed via the chain rule:
\begin{equation}
    \frac{\partial \mathcal{L}_{NTP}}{\partial \mathbf{z}_{base}} = \sum_{j=1}^m \frac{\partial \mathcal{L}_{NTP}}{\partial \tilde{\mathbf{z}}_j} \cdot \mathbf{J}_{soft}^{(j)} \cdot \frac{\partial \mathbf{z}_{base}}{\partial \theta_{enc}}
\end{equation}
where $\mathbf{J}_{soft}$ is the non-zero Jacobian of the Gumbel-Softmax operation. This allows the retrieval objective to dynamically reshape the semantic space of the encoder.

\subsubsection{Representational Unification via Symmetric Weight Sharing}
To ensure the retrieval objective explicitly reshapes the index structure, we perform a critical architectural modification to the generator. 
Standard generative models utilize a monolithic language model head (lm\_head) that projects hidden states to a fixed vocabulary. This is structurally incompatible with our hierarchical semantic identifiers.

\textbf{Hierarchical Projection Heads.} 
We discard the pre-trained lm\_head and replace it with $M$ lightweight, parallel classification heads, corresponding to the $M$ layers of the RQ-VAE (where $M=3$ in our implementation). 
For the $m$-th layer of the semantic ID, the probability of predicting code $k$ at step $t$ is computed by projecting the decoder's hidden state $\mathbf{h}_t$ onto the codebook space.

\textbf{Symmetric Parameter Constraint.} 
Crucially, instead of initializing new parameters for these projection heads, we enforce strict Symmetric Weight Sharing. The weight matrix of the $m$-th classifier, denoted as $\mathbf{W}_{out}^{(m)}$, is explicitly defined as the transpose of the $m$-th quantization codebook $\mathbf{E}^{(m)}$:
\begin{equation}
    \mathbf{W}_{out}^{(m)} \equiv \mathbf{E}^{(m)}, \quad \text{where } \mathbf{E}^{(m)} \in \mathbb{R}^{K \times d}
\end{equation}
Consequently, the prediction logit for code $k$ in layer $m$ becomes a direct measure of geometric similarity between the decoder state and the code embedding:
\begin{equation}
    P(y_t^{(m)} = k | \mathbf{h}_t) = \text{Softmax}\left( \gamma \cdot \frac{\mathbf{h}_t \cdot (\mathbf{e}_k^{(m)})^\top}{\|\mathbf{h}_t\| \|\mathbf{e}_k^{(m)}\|} \right)
\end{equation}
This design ensures Semantic Alignment: the decoder does not learn a separate mapping to predict codes; instead, it learns to generate hidden states that directly align with the physical geometry of the index codebook. This eliminates the ``translation gap'' and forces the index and retriever to co-evolve within a shared semantic manifold.

\subsection{Isotropic Geometric Optimization}
\label{Isotropic}
While the end-to-end architecture allows for joint training, the quality of the learned index is further bounded by the geometry of the embedding space. We identify \textit{Geometric Anisotropy} as a critical bottleneck and propose a Riemannian geometric correction.

\subsubsection{The Hubness Problem}
\label{hubness}

Standard generative models typically compute logits via the unnormalized Dot Product: $s(q, i) = \mathbf{h}_q^\top \mathbf{e}_i$. 
In high-dimensional spaces, this metric introduces a critical pathology: Magnitude-Semantics Entanglement. We analyze why this leads to popularity-induced bias under power-law distributions.
Decomposing the dot product reveals that the score is determined by both geometric alignment and magnitude:
\begin{equation}
    s(q, i) = \|\mathbf{h}_q\| \|\mathbf{e}_i\| \cos(\theta_{q,i})
    \label{eq:dot_decomp}
\end{equation}
Consider the optimization dynamics for an item $i$ using the standard Cross-Entropy loss. For a positive query-item pair $(q, i)$, the gradient descent update on the item embedding $\mathbf{e}_i$ includes a term proportional to the query vector (see derivation in Appendix \ref{sec:appendix_proof}) :
\begin{equation}
    \mathbf{e}_i^{(t+1)} \leftarrow \mathbf{e}_i^{(t)} + \eta \cdot (1 - P(i|q)) \cdot \mathbf{z}_q
\end{equation}
where $\eta$ is the learning rate.

For high-frequency popular items (large $|\mathcal{D}_i^+|$), this positive update occurs significantly more often than for tail items. 
In many retrieval settings, queries associated with the same item are not isotropic but concentrate in related semantic regions, making $\mathbf{h}_q$ partially aligned on average.

Consequently, the summation of these coherent gradient updates leads to a net accumulation in magnitude, causing $\|\mathbf{e}_i\|$ to grow with the item's frequency, provided that no strict norm constraints (e.g., strong weight decay or normalization) are applied.

This popularity-driven norm growth can yield high logits even with only moderate angular alignment. Such high-norm items become hubs, appearing in the  top-$k$ results for disproportionately many queries and thereby suppressing long-tail items largely due to norm-dominated geometry.

\subsubsection{Scaled Cosine Geometry}
To resolve this, we enforce an Isotropic distribution by constraining all representations to the Unit Hypersphere $\mathbb{S}^{d-1}$. We replace the dot product with Scaled Cosine Similarity in both the Quantizer and the Decoder:
\begin{equation}
    P(c_l = k | \mathbf{z}_q, \mathbf{c}_{<l}) = \operatorname{Softmax}\left( \gamma \cdot \frac{\mathbf{h}_l}{\|\mathbf{h}_l\|_2} \cdot \frac{(\mathbf{e}_k^{(l)})^\top}{\|\mathbf{e}_k^{(l)}\|_2} \right)
\end{equation}

where $\gamma$ is a learnable scaling parameter (initialized to 30.0). This constraint ensures that retrieval probability depends \textit{exclusively} on the semantic angle, effectively ``flattening'' the popularity bias.

\subsubsection{Theoretical Insight: Riemannian Gradient Dynamics}
We rigorously justify our geometric design by showing its equivalence to optimization on a Riemannian manifold.
Let $\mathcal{L}$ be the loss function. In Euclidean space, the gradient update $\nabla \mathcal{L}(\mathbf{e})$ for an embedding $\mathbf{e}$ is unbounded. However, on the manifold $\mathbb{S}^{d-1}$, the true direction of steepest descent is given by the Riemannian Gradient $\operatorname{grad}\mathcal{L}(\mathbf{e})$, which projects the Euclidean gradient onto the tangent space $T_{\mathbf{e}}\mathbb{S}^{d-1}$:
\begin{equation}
    \operatorname{grad}\mathcal{L}(\mathbf{e}) := (\mathbf{I} - \mathbf{e}\mathbf{e}^\top)\nabla \mathcal{L}(\mathbf{e})
\end{equation}
Substituting our Scaled Cosine logit $s = \gamma \cos \theta$ (where $\cos \theta = \hat{\mathbf{h}}^\top \hat{\mathbf{e}}$), the update direction becomes:
\begin{equation}
    \operatorname{grad}\mathcal{L}(\mathbf{e}) \propto \frac{\gamma}{\|\mathbf{e}\|} (\hat{\mathbf{h}} - \cos \theta \cdot \hat{\mathbf{e}})
\end{equation}
Geometrically, the term $(\hat{\mathbf{h}} - \cos \theta \cdot \hat{\mathbf{e}})$ represents the component of the query vector $\hat{\mathbf{h}}$ that is \textit{orthogonal} to the item embedding $\hat{\mathbf{e}}$.
Unlike the Dot Product gradient $\nabla_{dot} \propto \hat{\mathbf{h}}$ which encourages norm inflation, our Riemannian update strictly removes the radial component ($\cos \theta \cdot \hat{\mathbf{e}}$) that affects magnitude. This mathematically guarantees that the gradient energy is dedicated solely to angular rotation, optimizing semantic alignment without introducing magnitude-induced noise (Hubness).
Our training process using F.normalize effectively approximates a Riemannian Stochastic Gradient Descent\cite{bonnabel2013stochastic} with Retraction, ensuring stable convergence on the isotropic manifold.

\subsection{Unified Training Objectives}
To facilitate the stability of the DGI framework, we employ a compound objective function that synergizes generative, reconstructive, and discriminative constraints.

\subsubsection{Generative and Reconstruction Tasks}
The primary objective is the Next Token Prediction (NTP) task, which drives the \textit{Operational Unification} via soft gradient flow. For a batch $\mathcal{B}$, we maximize the likelihood of the ground-truth identifier sequence $\mathbf{c}^{(i)}$:
\begin{equation}
    \mathcal{L}_{NTP} = - \frac{1}{|\mathcal{B}|} \sum_{(q, i) \in \mathcal{B}} \sum_{t=1}^m \log P(c_t^{(i)} | q, c_{<t}^{(i)}; \Theta)
\end{equation}

To ensure that the discrete codes retain semantic fidelity, we impose hierarchical reconstruction constraints. Consistent with our \textit{Isotropic Geometric Optimization}, we enforce Global Reconstruction using Cosine Distance rather than MSE, preventing norm-induced semantic drift:
\begin{equation}
    \mathcal{L}_{Global} = \frac{1}{|\mathcal{B}|} \sum_{i \in \mathcal{B}} \left( 1 - \frac{\mathbf{z}_{base}^{(i)} \cdot (\sum_{j=1}^m \tilde{\mathbf{z}}_j^{(i)})^\top}{\|\mathbf{z}_{base}^{(i)}\| \|\sum_{j=1}^m \tilde{\mathbf{z}}_j^{(i)}\|} \right)
\end{equation}
Simultaneously, we apply a standard Local Codebook loss to refine the quantization dictionary:
\begin{equation}
    \mathcal{L}_{Local} = \frac{1}{|\mathcal{B}|} \sum_{i \in \mathcal{B}} \sum_{j=1}^m \left( \|\text{sg}[\mathbf{r}_{j-1}] - \mathbf{e}_{j,k}\|^2 + \beta \|\mathbf{r}_{j-1} - \text{sg}[\mathbf{e}_{j,k}]\|^2 \right)
\end{equation}

\subsubsection{Alignment Strategy}
To stabilize the \textit{Representational Unification} (i.e., the shared parameter space), we maximize the mutual information between the query $\mathbf{z}_q$ and the target item $\mathbf{z}_{base}$ using in-batch negatives. This discriminative loss pushes the query representation closer to the target on the hypersphere:
    \begin{equation}
        \mathcal{L}_{InfoNCE} = - \log \frac{\exp(\hat{\mathbf{z}}_q^\top \hat{\mathbf{z}}_{base}^{i^+} / \tau_{CL})}{\sum_{n \in \mathcal{B}} \exp(\hat{\mathbf{z}}_q^\top \hat{\mathbf{z}}_{base}^{n} / \tau_{CL})}
    \end{equation}

\subsubsection{Diversity Regularization}
Finally, to prevent codebook collapse—a common failure mode in discrete representation learning—we maximize the entropy of the code usage distribution:
\begin{equation}
    \mathcal{L}_{Div} = \sum_{j=1}^m \sum_{k=1}^K \bar{p}_{j,k} \log (\bar{p}_{j,k} + \epsilon)
\end{equation}
The total objective is a weighted sum: $\mathcal{L}_{Total} = \mathcal{L}_{NTP} +  \mathcal{L}_{Global} +  \mathcal{L}_{Local} +  \mathcal{L}_{InfoNCE} +   \mathcal{L}_{Div}$.

\begin{table}[t]
\centering
\caption{Statistics of AOL4PS and AE-PV Datasets. ``Hist''. denotes interaction history availability. Lengths are measured in tokens.}
\label{tab:dataset_stats}
\begin{tabular}{l||cc||cc}
\toprule
\multirow{2}{*}{\textbf{Metric}} & \multicolumn{2}{c|}{\textbf{AOL4PS}} & \multicolumn{2}{c}{\textbf{AE-PV}} \\
\cmidrule(lr){2-3} \cmidrule(lr){4-5}
 & \textbf{Train} & \textbf{Test} & \textbf{Train} & \textbf{Test} \\
\midrule
\textbf{Scale} & & & & \\
Total Samples & 1,061K & 265K & 12.0M & 13.1M \\
\midrule
\textbf{Distribution} & & & & \\
w/ History (\%) & 100.0\% & 100.0\% & 73.6\% & 73.5\% \\
Sparsity & 0.00 & 0.00 & 0.26 & 0.26 \\
\midrule
\textbf{Avg. Length} & & & & \\
Avg. Hist. Len. & 9.45 & 10.00 & 6.26 & 6.23 \\
Avg. Query Len. & 2.00 & 1.99 & 3.16 & 3.15 \\
Avg. Target Len. & 6.92 & 6.86 & 18.54 & 18.55 \\
\bottomrule
\end{tabular}
\end{table}
\section{Experiments}

In this section, we provide empirical validation of the proposed DGI framework. Our experiments are designed to answer four core research questions:
\begin{itemize}
    \item \textbf{RQ1 (Overall Effectiveness):} Does DGI outperform state-of-the-art sparse, dense, and generative baselines?
    \item \textbf{RQ2 (Ablation Analysis):} How do the ``End-to-End Structural Unification'' and ``Isotropic Geometric Optimization'' individually contribute to performance?
    \item \textbf{RQ3 (Mechanism Verification):} Can we observe the restoration of gradient stability and geometric isotropy, particularly for long-tail items?
    \item \textbf{RQ4 (Online Evaluation):} Does DGI demonstrate practical scalability and robustness in a real-world online production environment, delivering significant gains in commercial metrics?
\end{itemize}

\begin{table*}[t]
\caption{Overall Performance Comparison. The best result for each metric is in \textbf{bold}, and the second-best is \underline{underlined}.}
\label{tab:main_results}
\centering
\small
\begin{tabular}{lccccccc||ccccccc}
\toprule
 \multicolumn{1}{c}{} & \multicolumn{7}{c||}{\textbf{AOL4PS}} & \multicolumn{7}{c}{\textbf{AE-PV}} \\
\cmidrule(lr){2-8} \cmidrule(lr){9-15} \\
 \textbf{Model} & H@1 & H@5 & H@10 &H@20 & N@5 & N@10 & N@20 & H@1 & H@5 & H@10 &H@20 & N@5 & N@10 & N@20 \\
\midrule
BM25 & 0.2833 & 0.3872  & 0.4177 & 0.4439 &  0.3399 & 0.3499 & 0.3565 & 0.0125 & 0.0363 &0.0540 & 0.0789 &0.0245 & 0.0303 &0.0365 \\
DocT5Query &0.1608 &0.2942 &0.3289 &0.3557 &0.2345 &0.2459 &0.2527 &0.0046 &0.0211 &0.0334 &0.0523 &0.0129 &0.0168 &0.0216\\
\midrule
DSSM-T5 & 0.4049 & \underline{0.6435} &\underline{0.7307} &\textbf{0.8022} &0.5331 & 0.5615 & 0.5797 & 0.0095 &0.0364 &0.0615 & 0.1005 &0.0229 &0.0310 &0.0408 \\
Sentence-T5 & \underline{0.4657} & 0.5352 & 0.6159 & 0.7268 & \underline{0.5568} & \underline{0.5902} & \underline{0.6061} & \textbf{0.0223} &\underline{0.0751} &\underline{0.1162} &\underline{0.1724} &\underline{0.0489} &\underline{0.0621} &\underline{0.0762} \\
\midrule
DSI & 0.1385 & 0.2485 & 0.2629 & 0.2788 & 0.2042 & 0.2089 & 0.2129 & 0.0037 & 0.0149 & 0.0258 & 0.0437 & 0.0092 & 0.0127 & 0.0172 \\ 
Two-Stage & 0.4323 & 0.4782  & 0.4952  & 0.5133 & 0.4570 & 0.4626 & 0.4671 & 0.0182 & 0.0301 & 0.0303 & 0.0314  & 0.0250 & 0.0263 & 0.0351 \\
UniSearch &0.3383 &0.4838 &0.5134 &0.5342 &0.4180 &0.4277 &0.4330 &0.0108 & 0.0397 &0.0652&0.1012 &0.0252 &0.0334 &0.0425\\
\midrule
\textbf{DGI (Ours)} & \textbf{0.5052} & \textbf{0.7247} & \textbf{0.7637} & \underline{0.7887} & \textbf{0.6265} & \textbf{0.6393} & \textbf{0.6457} & \underline{0.0216} & \textbf{0.0808} & \textbf{0.1315} & \textbf{0.2001} & \textbf{0.0512} & \textbf{0.0675} & \textbf{0.0848}\\
\bottomrule
\end{tabular}
\end{table*}

\begin{table*}[t]
\caption{\textbf{Comprehensive Ablation Study.} We decouple the contributions of Operational Unification and Geometric Restoration. "$\checkmark$" denotes the component is enabled, "$\times$" denotes disabled/replaced with standard baseline.}
\label{tab:ablation}
\small
\centering
\setlength{\tabcolsep}{4.5pt}
\begin{tabular}{l|ccc|cccc|cccc}
\toprule
\multirow{2}{*}{\textbf{Model Variant}} & \multicolumn{3}{c|}{\textbf{Key Components}} & \multicolumn{4}{c|}{\textbf{AOL4PS}} & \multicolumn{4}{c}{\textbf{AE-PV}} \\
 & \textbf{Soft} & \textbf{Weight} & \textbf{Scaled} & \multirow{2}{*}{\textbf{H@1}} & \multirow{2}{*}{\textbf{H@10}} & \multirow{2}{*}{\textbf{H@20}} & \multirow{2}{*}{\textbf{N@20}} & \multirow{2}{*}{\textbf{H@1}} & \multirow{2}{*}{\textbf{H@10}} & \multirow{2}{*}{\textbf{H@20}} & \multirow{2}{*}{\textbf{N@20}} \\
 & \textbf{Grad.} & \textbf{Share} & \textbf{Cosine} & & & & & & & & \\
\midrule
\textbf{DGI (Full Model)} & $\checkmark$ & $\checkmark$ & $\checkmark$ & \textbf{0.5651} & \textbf{0.7752} & \textbf{0.7980} & \textbf{0.6812} & \textbf{0.0149} & \textbf{0.0915} & \textbf{0.1417} & \textbf{0.0594} \\
\midrule
\multicolumn{12}{l}{\textit{Ablation of Operational Unification}} \\
w/o Soft Gradient Flow & $\times$ & $\checkmark$ & $\checkmark$ & 0.5216 & 0.7513 & 0.7762 & 0.6501 & 0.0101 & 0.0524 & 0.0813 & 0.0302 \\
w/o Weight Sharing & $\checkmark$ & $\times$ & $\checkmark$ & 0.5282 & 0.7561 & 0.7804 & 0.6603 & 0.099 & 0.0573 & 0.0896 & 0.0311 \\
w/o Both (Fully Decoupled) & $\times$ & $\times$ & $\checkmark$ & 0.5193 & 0.7504 & 0.7709 & 0.6485 & 0.0097 & 0.0426 & 0.0606 & 0.0284 \\
\midrule
\multicolumn{12}{l}{\textit{Ablation of Isotropic Geometric Optimization}} \\
w/o Scaled Cosine & $\checkmark$ & $\checkmark$ & $\times$ & 0.3768 & 0.5242 & 0.5445 & 0.4573 & 0.0108 & 0.0652 & 0.1012 & 0.0425 \\
\midrule
\multicolumn{12}{l}{\textit{Naive Baseline}} \\
w/o All Components & $\times$ & $\times$ & $\times$ & 0.0531 & 0.0859 & 0.0969 & 0.1171 & 0.0002 & 0.0004 & 0.0005 & 0.0005 \\
\bottomrule
\end{tabular}
\end{table*}

\begin{figure}[hbpt]
\centering
\begin{subfigure}[t]{0.48\columnwidth}
  \centering
  \includegraphics[width=\linewidth]{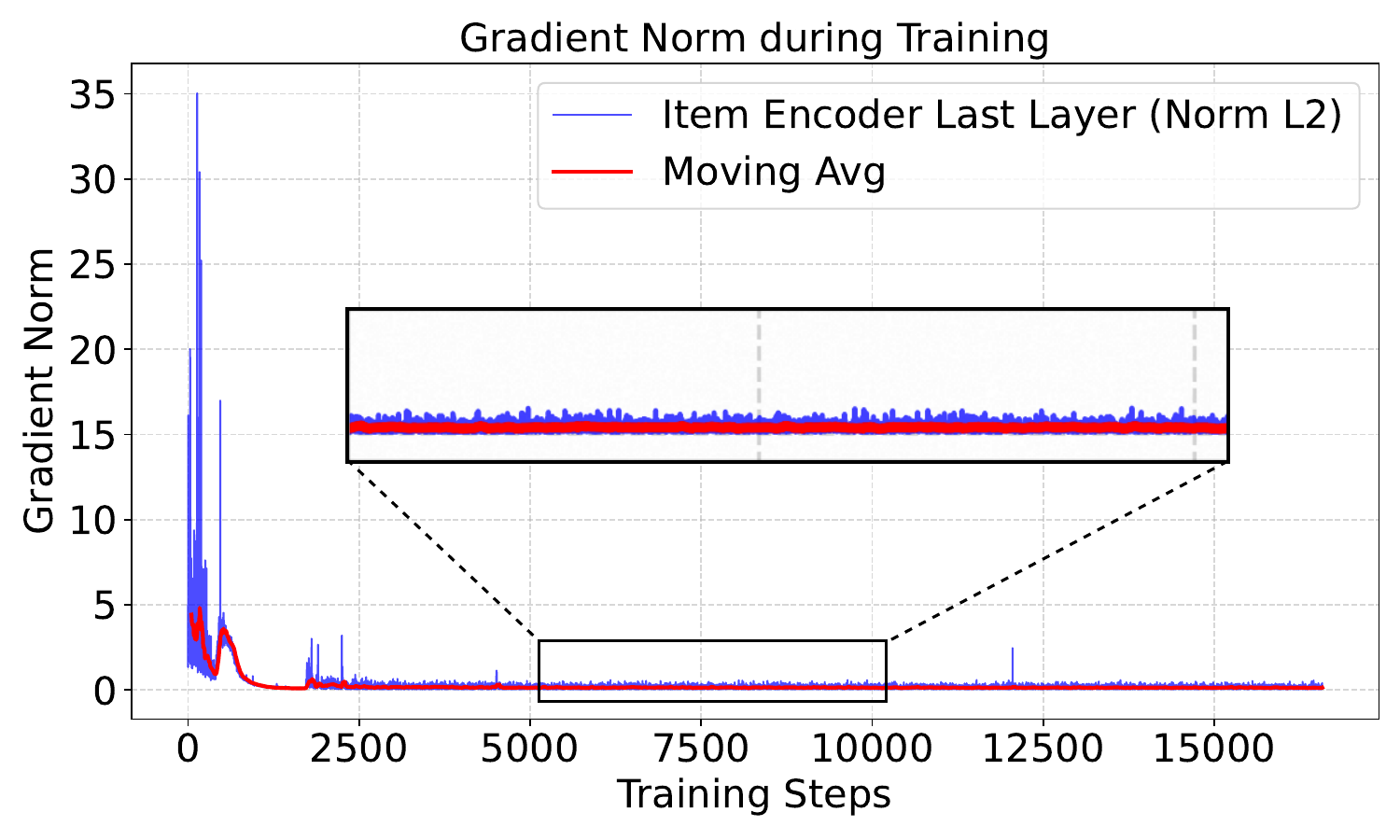}
  \caption{\textbf{DGI (Soft Gradient)}}
\end{subfigure}\hfill
\begin{subfigure}[t]{0.48\columnwidth}
  \centering
  \includegraphics[width=\linewidth]{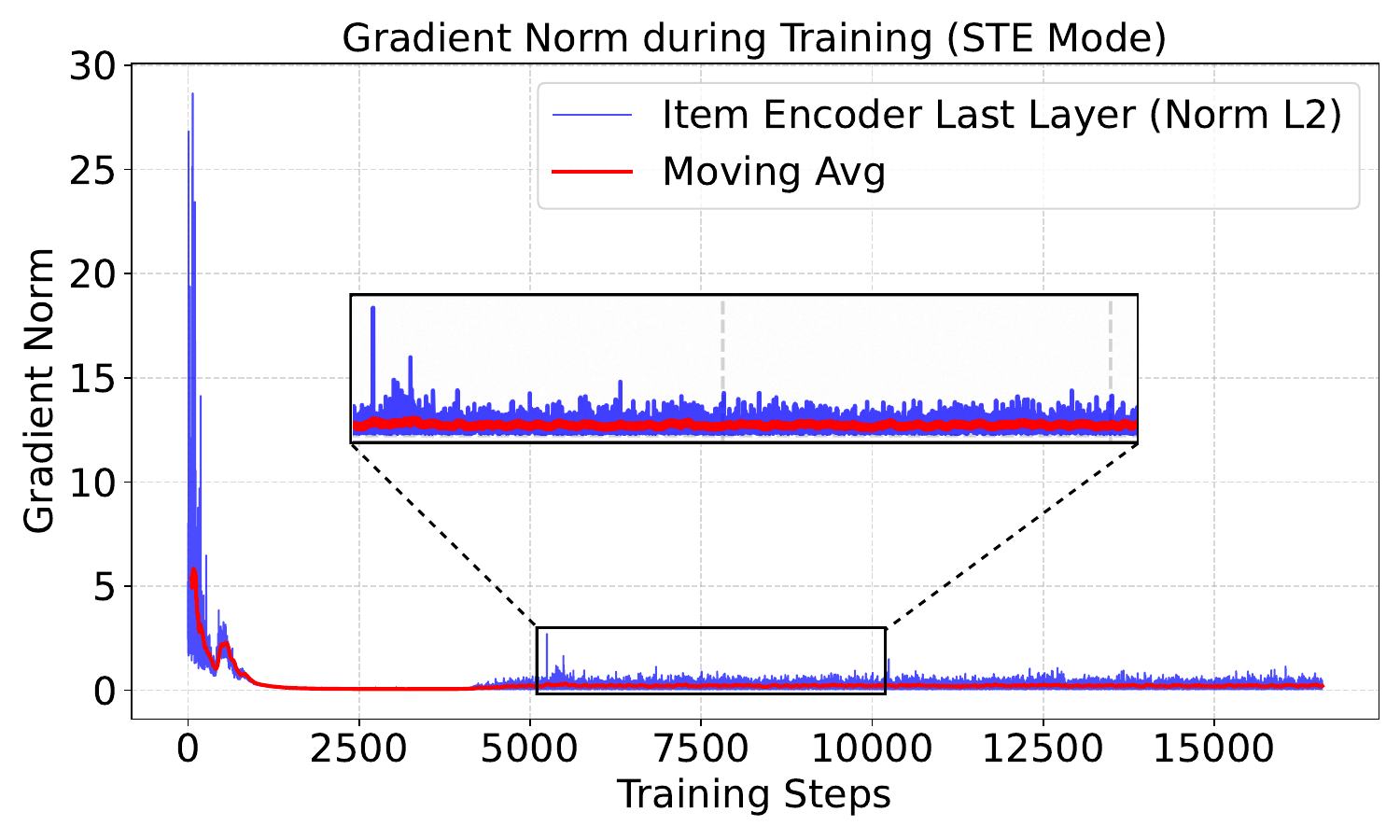}
  \caption{\textbf{Baseline (STE Variant)}}
\end{subfigure}
\caption{\textbf{Analysis of Optimization Stability (Gradient Norms).} 
Visualization of gradient magnitude variance on the same dataset. 
(a) \textbf{DGI} exhibits smooth and consistent gradient flows, indicating that our Soft Teacher Forcing effectively stabilizes the backward pass.
(b) The \textbf{STE Baseline} (replacing Gumbel-Softmax with STE) suffers from severe oscillation and high-variance spikes. 
This contrast empirically validates that the differentiable pathway is crucial for stable end-to-end index learning.}
\label{fig3}
\end{figure}

\begin{figure}[hbpt]
\centering
\begin{subfigure}[t]{0.49\columnwidth}
  \centering
  \includegraphics[width=\linewidth]{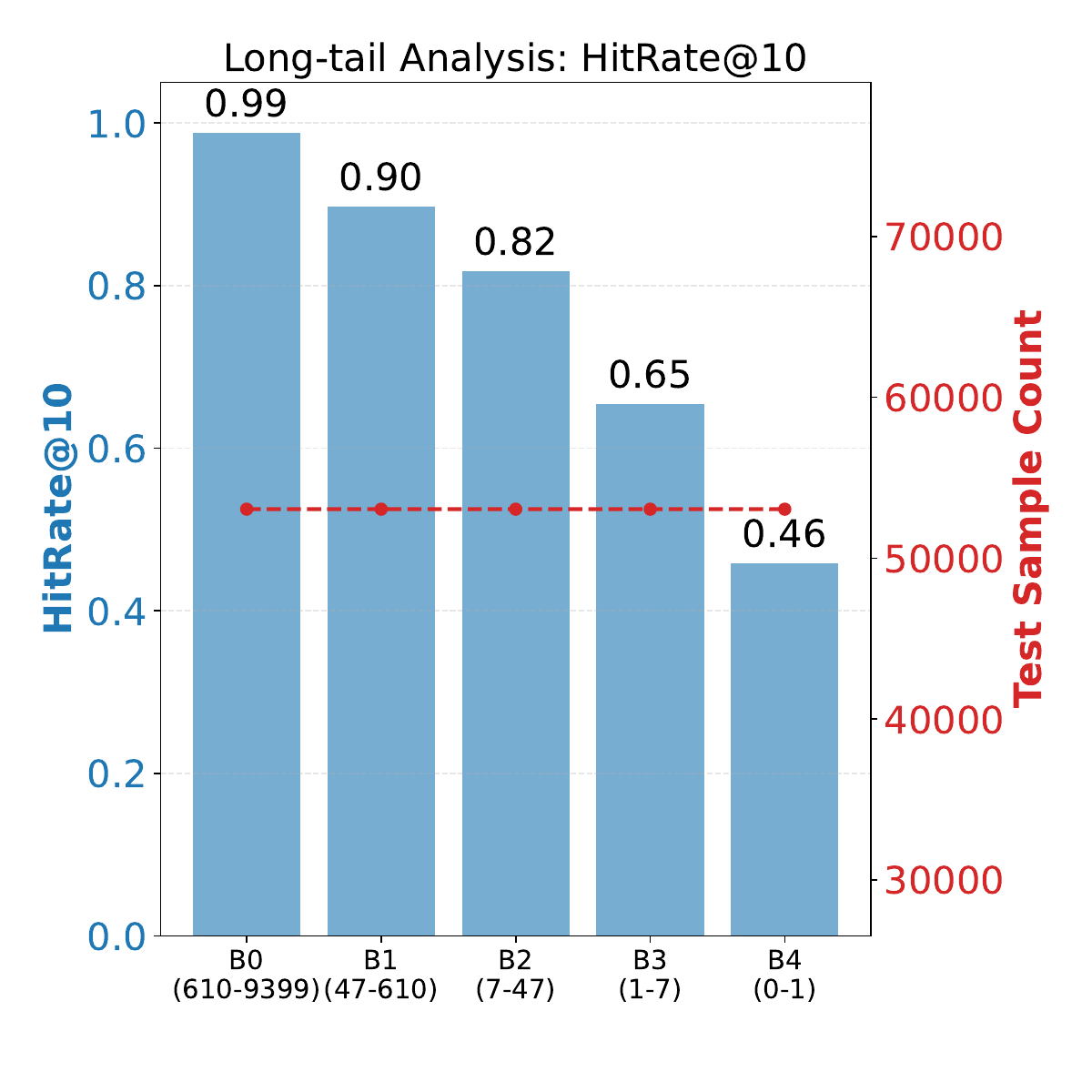}
  \caption{\textbf{DGI}}
\end{subfigure}\hfill
\begin{subfigure}[t]{0.49\columnwidth}
  \centering
  \includegraphics[width=\linewidth]{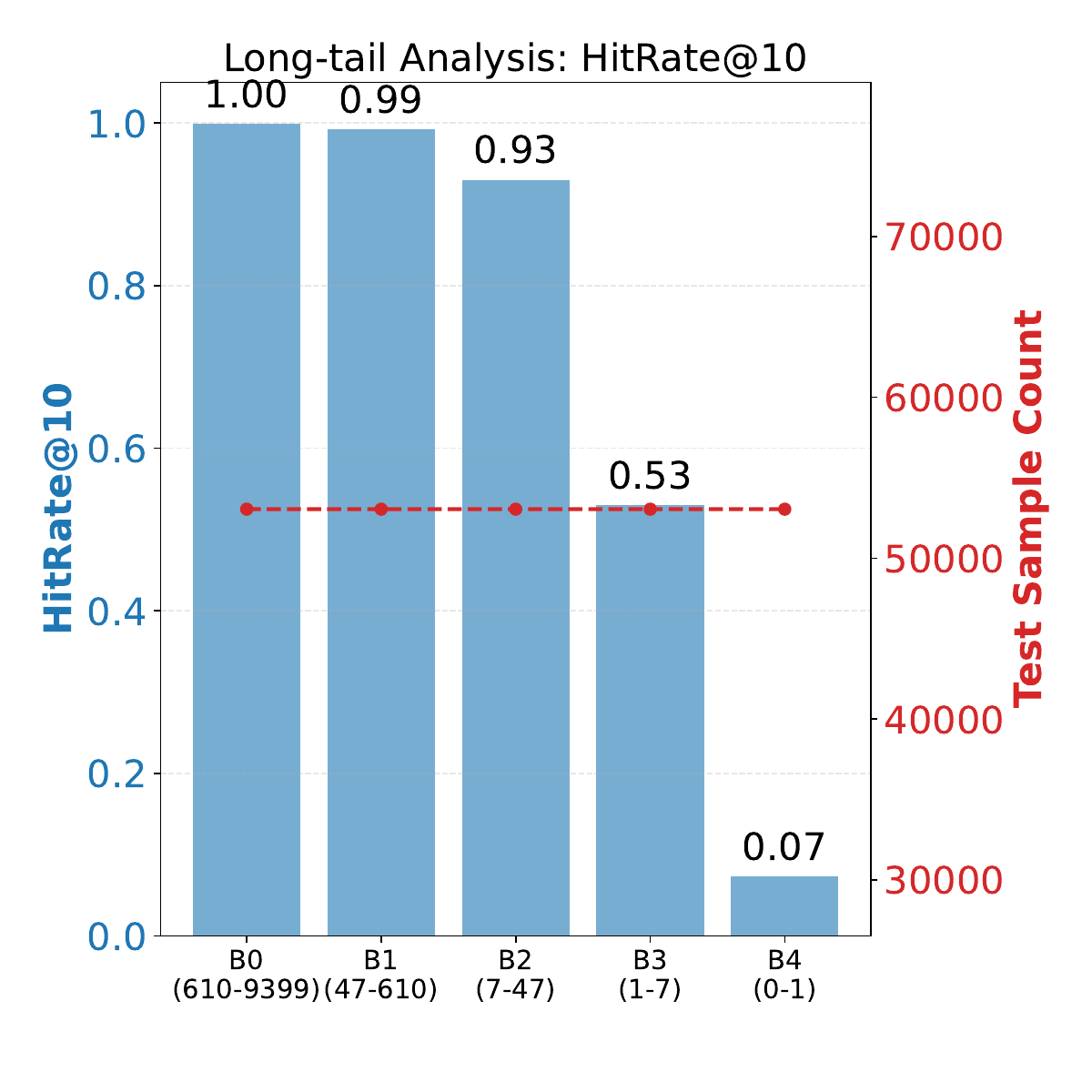}
  \caption{\textbf{Two-Stage with Dot Product}}
\end{subfigure}
\caption{\textbf{Long-tail Robustness Analysis.} Performance (HitRate) across item popularity deciles (B0: Head $\to$ B4: Tail).
(a) \textbf{DGI} maintains a robust and uniform performance profile across all buckets, demonstrating that our Isotropic Geometric Optimization effectively recalls long-tail items.
(b) In contrast, the \textbf{Two-Stage with Dot Product} suffers from a classic ``rich-get-richer'' pattern, where performance collapses in the tail buckets (B3-B4) due to severe popularity bias.}
\label{fig4}
\end{figure}

\subsection{Experimental Setup}

\subsubsection{Datasets}
We evaluate our proposed DGI framework on two large-scale datasets covering web search and e-commerce search. Detailed statistics for both datasets are summarized in Table \ref{tab:dataset_stats}.

\textbf{AOL4PS \cite{guo2021aol4ps}:} Derived from the AOL web query logs, this dataset is tailored for personalized search. We reconstruct the textual sessions by mapping query and document indices back to their raw text content. For each user interaction, we construct a sample input consisting of the current query text and a history window of the preceding 10 clicked document titles, with the current clicked title as the prediction target. To simulate a realistic streaming retrieval scenario and prevent future information leakage, we strictly adopt a global time-based split. All samples are sorted chronologically by timestamp; the first 80\% constitute the training set, while the subsequent 20\% form the testing set. Note that this split is global rather than user-wise, allowing a user's early behaviors to appear in the training set while their later behaviors appear in the test set.

\textbf{AE-PV:} A proprietary dataset collected from the real-world Page View (PV) logs of a leading e-commerce platform. It is characterized by high sparsity and strict semantic matching requirements. Each sample includes a search query and a history sequence of the user's last 10 clicked product titles. The dataset spans two consecutive days of traffic: we utilize the data from the first day for training and the second day for testing. This setup rigorously evaluates the model's ability to generalize to future queries based on past interactions.

\subsubsection{Baselines}
We compare DGI against baselines from three paradigms. To ensure fairness, all methods are adapted to be context-aware using both query and history (details in Appendix \ref{app:baseline}) :

\textbf{Sparse Retrieval:} We include \textbf{BM25} \cite{robertson2009probabilistic} as the standard for exact lexical matching, and \textbf{DocT5Query} \cite{nogueira2019doc2query}, which enhances BM25 by expanding documents with generated queries to mitigate vocabulary mismatch.
    
\textbf{Dense Retrieval:} We implement the classic \textbf{DSSM} \cite{huang2013learning} using the T5 encoder as the backbone to ensure a fair comparison of model capacity. We also include \textbf{Sentence-T5} \cite{ni2022sentence}, a strong baseline that fine-tunes pre-trained T5 encoders with contrastive loss.
    
\textbf{Generative Retrieval:} We compare against \textbf{DSI} \cite{tay2022transformer} and the \textbf{Two-Stage baseline}, which is implemented based on TIGER \cite{tiger2023} by incorporating a query encoder to train a sequence-to-sequence retriever that maps user queries to the fixed item codes established by the RQ-VAE. Additionally, we compare against \textbf{UniSearch} \cite{chen2025unisearch}, a representative joint training framework. 

\subsubsection{Metrics}
We report \textbf{HitRate@K} (H@K) and \textbf{NDCG@K}(N@K) for $K\in\{1, 5, 10, 20\}$. HitRate measures recall capability, while NDCG evaluates the ranking quality.
\subsubsection{Implementation Details}

We implemented DGI on a single NVIDIA A100 GPU.

\textbf{Architecture \& Indexing.} We utilize the pre-trained T5-large as the backbone. The hierarchical indexer employs Residual Quantization (RQ-VAE) with a depth of $m=3$ and hierarchical codebook sizes of $\{4096, 1024, 256\}$. We initialize the codebooks via K-means clustering. 
Furthermore, the indexer strictly enforces the Scaled Cosine Geometry: both the input residuals and codebook embeddings are $\ell_2$-normalized and scaled before code assignment, ensuring isotropic optimization on the hypersphere. 

\textbf{Training Strategy.} We use the AdamW optimizer with a differential learning rate strategy: $5 \times 10^{-5}$ for the pre-trained backbone and $5 \times 10^{-4}$ for the quantizer to accelerate codebook convergence. We apply a linear warmup for the first 1,000 steps followed by linear decay. The global batch size is set to 128.

\textbf{Inference.} We employ a Coarse-to-Fine strategy. Offline, we map quantized SIDs to an inverted index and cache normalized item embeddings. Online, we identify candidate semantic clusters via Constrained Beam Search. Candidates within these clusters are then ranked using a Lexicographical Rule: sorting primarily by the generative beam score, and secondarily by the fine-grained geometric similarity ($\hat{\mathbf{z}}_q^\top \hat{\mathbf{z}}_{item}$) to resolve hash collisions.

\textbf{Hyperparameters.} To stabilize the Scaled Cosine objective, the scaling factor $\gamma$ is explicitly initialized to $30.0$. For Gumbel-Softmax relaxation, the temperature $\tau$ is annealed exponentially from $1.0$ to $0.1$. During inference, we generate Semantic IDs using beam search with a beam size from 1 to 20.

\subsection{Overall Performance (RQ1)}
As summarized in Table \ref{tab:main_results}, DGI consistently outperforms representative sparse, dense, and generative baselines across both datasets. Notably, DGI demonstrates a decisive advantage over existing generative paradigms, achieving a \textbf{4.3x} improvement in HitRate@10 on the challenging AE-PV dataset compared to the Two-Stage baseline and significantly surpassing the joint-training method UniSearch. Furthermore, DGI excels in ranking quality against strong dense retrievers like Sentence-T5 (e.g., +8.3\% NDCG@10 on AOL4PS), validating that our isotropic geometric optimization yields superior semantic fidelity compared to standard contrastive dual-encoders.

\subsection{Ablation Studies (RQ2)}

Table \ref{tab:ablation} provides a rigorous component-wise validation of the DGI framework. First, Operational Unification proves foundational; severing the optimization link (w/o Soft Gradient) or the parameter alignment (w/o Weight Sharing) hinders the back-propagation of retrieval signals, leading to consistent performance degradation (e.g., H@1 drops from 0.5651 to 0.5216 on AOL4PS). Second, Geometric Restoration is identified as the critical safeguard against popularity bias; reverting to the unnormalized Dot Product (w/o Scaled Cosine) triggers a precipitous 33.3\% collapse in H@1 (0.5651 → 0.3768), confirming that norm inflation severely distorts semantic ranking even within a differentiable architecture. Ultimately, removing all components degrades the system to a near-random baseline (0.0531), demonstrating that structural differentiability and geometric isotropy are not merely additive enhancements, but mutually dependent pillars essential for stable end-to-end learning.

\subsection{In-depth Mechanism Analysis (RQ3)}

\subsubsection{Optimization Stability: Gradient Norm Analysis}
Figure \ref{fig3} visualizes the gradient norm dynamics of the item encoder. The baseline (STE) suffers from severe oscillation and high-variance spikes, confirming that the surrogate estimator introduces significant optimization noise that destabilizes convergence. In sharp contrast, DGI exhibits smooth and consistent gradient flows throughout training. This empirically verifies that our \textit{Soft Gradient Flow} effectively establishes a low-variance, differentiable pathway, allowing the retrieval objective to reliably backpropagate and refine the continuous representation space without the instability inherent in discrete approximations.

\subsubsection{Geometric Isotropy: Long-tail Robustness}
Figure \ref{fig4} validates the efficacy of our \textit{Scaled Cosine Geometry} across AOL4PS datasets item popularity deciles. The Two-Stage baseline suffers from a steep performance collapse in tail buckets, exhibiting a classic ``rich-get-richer'' pattern driven by unconstrained norm inflation. In contrast, DGI maintains a remarkably uniform performance profile with robust accuracy even on zero-shot items. This empirically confirms that our isotropic spherical constraint effectively decouples popularity magnitude from semantic relevance, rendering long-tail items geometrically accessible regardless of their interaction frequency.

\subsubsection{Topological Visualization: t-SNE}
Figure \ref{fig5} provides a topological visualization of the learned semantic space. The Two-Stage baseline exhibits severe \textit{Representation Collapse} (highlighted by red arrows), where embeddings crowd into a narrow, anisotropic cone, leaving the latent space largely underutilized. In contrast, DGI forms a highly \textit{isotropic} distribution with distinct, well-separated clusters spanning the entire unit hypersphere. This visual evidence confirms that our Riemannian optimization effectively maximizes the entropy of the representation space, ensuring that the index capacity is fully utilized to capture diverse item semantics.

\begin{figure}[hbpt]
\centering
\begin{subfigure}[t]{0.49\columnwidth}
  \centering
  \includegraphics[width=\linewidth]{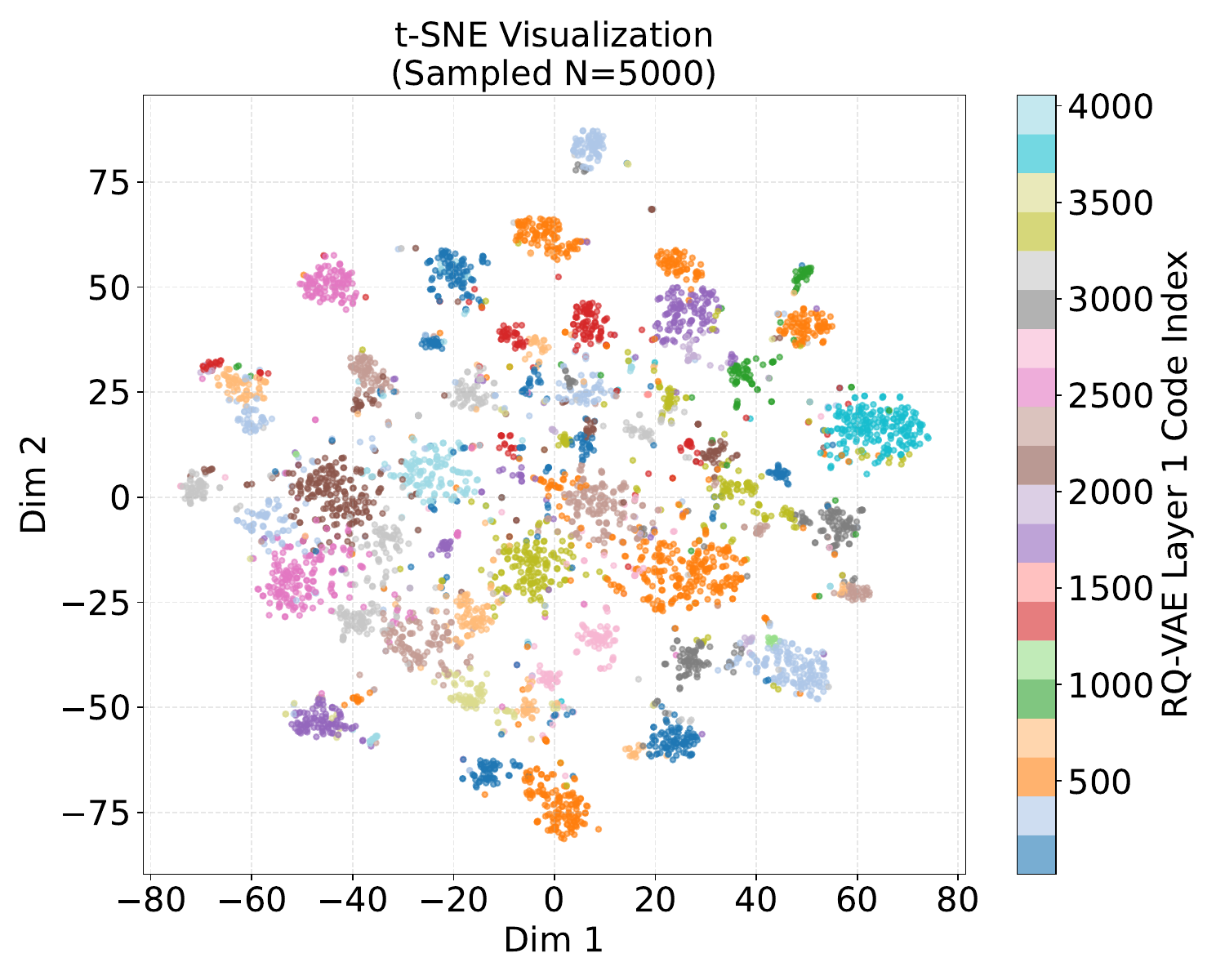}
  \caption{DGI}
\end{subfigure}\hfill
\begin{subfigure}[t]{0.49\columnwidth}
  \centering
  \includegraphics[width=\linewidth]{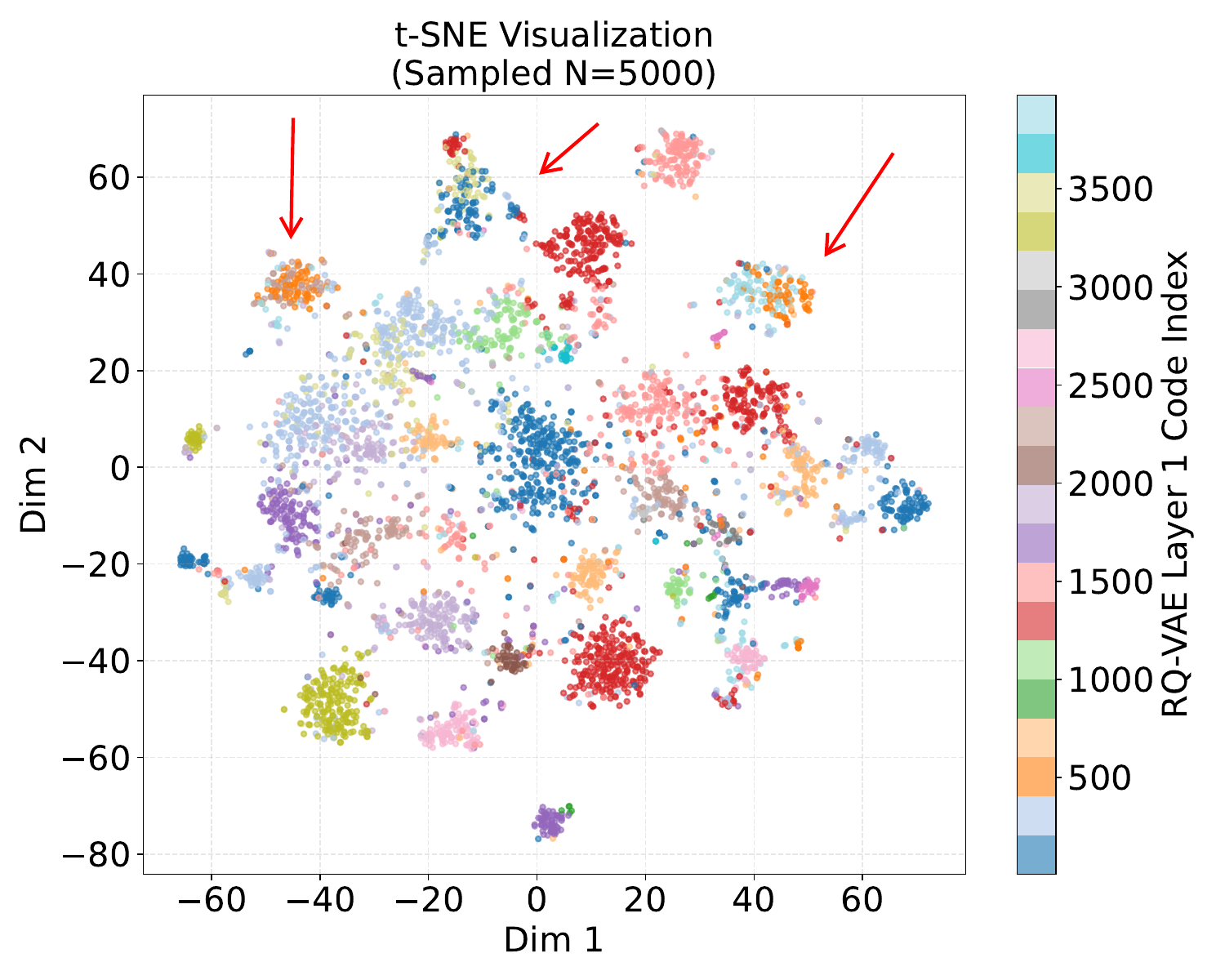}
  \caption{Two-Stage}
\end{subfigure}
\caption{\textbf{Topological Visualization of Learned Semantic Spaces (t-SNE).} (a) DGI learns a highly \textit{Isotropic} distribution on the hypersphere with well-separated clusters, confirming the geometric restoration capability of our Riemannian optimization. (b) The baseline space exhibits \textit{Representation Collapse} and \textit{Anisotropy} (indicated by red arrows), where embeddings crowd into a narrow cone, leaving the semantic space underutilized.}
\label{fig5}
\end{figure}
\subsection{Online Evaluation (RQ4)}

We conducted a 7-day online A/B test on a leading global e-commerce platform serving hundreds of millions of users. The experiment compared our generative model against a control group running the production hybrid system (integrating lexical and embedding modules consistent with our offline baselines). To ensure low latency, we employed a decoupled inference strategy, using the query's top-6 clicked items as a history proxy for training consistency. As a supplementary recall channel, our model yielded statistically significant gains of +1.27\% CTR and +1.11\% RPM (both $p < 0.001$), validating its effectiveness in a large-scale industrial setting.

\section{Conclusion}

In this work, we identify two intrinsic bottlenecks in Generative Retrieval: the \textit{Optimization Gap} due to non-differentiable indexing and the \textit{Geometric Conflict} causing norm-dominated Hubness. 
We propose Differentiable Geometric Indexing (DGI) to resolve these via \textit{Operational Unification} and \textit{Isotropic Geometric Optimization}. 
By establishing a fully differentiable pathway and enforcing spherical constraints, DGI transforms the index from a static artifact into a dynamic structure that co-evolves with retrieval intent. 
Theoretical analysis and extensive experiments---including a significant online A/B test---confirm that DGI effectively eliminates popularity bias and outperforms state-of-the-art baselines, providing a robust paradigm for next-generation industrial retrieval.

\bibliographystyle{ACM-Reference-Format}
\bibliography{ref}

\appendix

\section{Theoretical Analysis}
\label{sec:appendix_proof}

In this section, we provide the convergence analysis for the proposed optimization framework. We clarify that our implementation formally employs Weight Normalization (Reparameterization): we optimize unconstrained parameters $\mathbf{w} \in \mathbb{R}^d$ using standard optimizers, while the model utilizes the normalized projection $\theta(\mathbf{w}) = \mathbf{w} / \|\mathbf{w}\|_2 \in \mathbb{S}^{d-1}$ during the forward pass.

\subsection{Problem Formulation}
We consider the \textbf{continuous soft surrogate objective} $J(\mathbf{w})$ defined by the Gumbel-Softmax relaxation (with `hard=False`). The optimization problem is:
\begin{equation}
    \min_{\mathbf{w} \in \mathbb{R}^d \setminus \{\mathbf{0}\}} J(\mathbf{w}) := \mathbb{E}_{\xi} [\mathcal{L}(\theta(\mathbf{w}); \xi)]
\end{equation}
where $\xi$ represents the stochasticity from data sampling and Gumbel noise.

\begin{figure}
    \centering
    \includegraphics[width=0.5\linewidth]{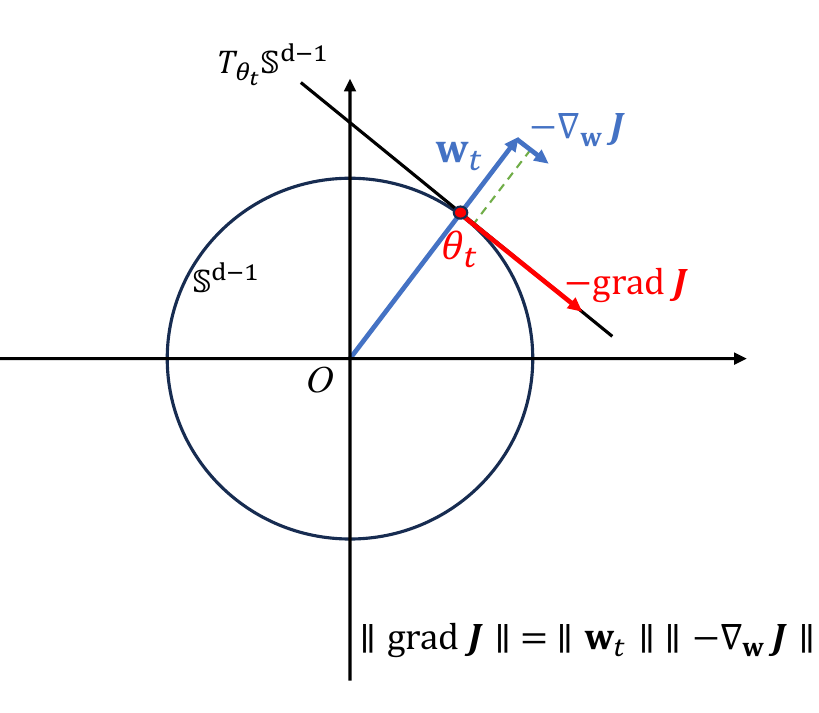}
    \caption{Geometric interpretation of Weight Normalization. The optimization is performed on the unconstrained parameter $\mathbf{w}$ in Euclidean space. The Euclidean gradient $\nabla_{\mathbf{w}} J$ is orthogonal to the radial direction and is strictly parallel to the Riemannian gradient $\operatorname{grad} J$ on the unit sphere $\mathbb{S}^{d-1}$, scaled by $1/\|\mathbf{w}\|_2$.}
    \label{figa}
\end{figure}

\subsection{Gradient Dynamics}
The relationship between the Euclidean gradient w.r.t. $\mathbf{w}$ ($\nabla_{\mathbf{w}} J$) and the Riemannian gradient w.r.t. $\theta$ ($\operatorname{grad} J$) is derived via the chain rule:
\begin{equation}
    \label{eq:grad_relation}
    \nabla_{\mathbf{w}} J(\mathbf{w}) = \frac{1}{\|\mathbf{w}\|_2} (\mathbf{I} - \theta\theta^\top) \nabla_{\theta} J(\theta) = \frac{1}{\|\mathbf{w}\|_2} \operatorname{grad}_{\mathbb{S}^{d-1}} J(\theta)
\end{equation}
This implies the norm relationship:
\begin{equation}
    \label{eq:norm_relation}
    \|\operatorname{grad}_{\mathbb{S}^{d-1}} J(\theta)\|_2 = \|\mathbf{w}\|_2 \cdot \|\nabla_{\mathbf{w}} J(\mathbf{w})\|_2
\end{equation}

\subsection{Convergence Analysis}
We analyze the convergence of the underlying Stochastic Gradient Descent (SGD) process on $\mathbf{w}$. Let $\mathcal{F}_t$ denote the filtration generated by the random variables (initialization and noise sequence) up to iteration $t$, such that the parameter $\mathbf{w}_t$ is $\mathcal{F}_t$-measurable, but the noise $\xi_t$ used to compute the gradient $g_t$ is independent of $\mathcal{F}_t$.

\textbf{Assumptions:}
\begin{itemize}
    \item \textbf{A1 (Bounded Norm):} We assume the trajectory of the parameter norm remains within a bounded ring domain, i.e., there exist constants $0 < m \le M < \infty$ such that $m \le \|\mathbf{w}_t\|_2 \le M$ for all $t$.
    \textit{Justification:} This is a technical assumption to ensure smoothness. In training, this can be rigorously enforced by explicit norm clipping or regularization if necessary.
    
    \item \textbf{A2 (Smoothness):} The objective function $J(\mathbf{w})$ is $L$-smooth on the active domain $\mathcal{D} = \{\mathbf{w} : m \le \|\mathbf{w}\| \le M\}$.
    \textit{Justification:} We assume the objective behaves smoothly within the bounded domain $\mathcal{D}$, analyzing a smoothed approximation of the neural network if non-smooth activations are present.
    
    \item \textbf{A3 (Unbiased Soft Estimator):} The stochastic gradient $g_t$ is an unbiased estimator of the \textit{soft surrogate} gradient $\nabla_{\mathbf{w}} J(\mathbf{w}_t)$ with bounded variance: 
    \begin{equation}
        \mathbb{E}[g_t | \mathcal{F}_t] = \nabla_{\mathbf{w}} J(\mathbf{w}_t), \quad \mathbb{E}[\|g_t - \nabla_{\mathbf{w}} J(\mathbf{w}_t)\|^2 | \mathcal{F}_t] \le \sigma^2
    \end{equation}
    
    \item \textbf{Remark on Discretization:} The theorem below strictly applies to the soft relaxation. In our experiments, hard SIDs are used only as supervision targets (argmax), while the teacher-forcing conditioning path uses the soft relaxation; therefore gradients propagate through the relaxation.

    \item \textbf{A4 (Lower Bound):} The objective function is bounded below, i.e., $J(\mathbf{w}) \ge J^* > -\infty$.
\end{itemize}

\begin{theorem}
Under Assumptions A1-A4, if the learning rate sequence satisfies the Robbins-Monro conditions ($\sum \eta_t = \infty, \sum \eta_t^2 < \infty$), applying SGD on $\mathbf{w}$ guarantees that the Riemannian gradient norm vanishes asymptotically in the sense of limit inferior:
\begin{equation}
    \liminf_{t \to \infty} \|\operatorname{grad}_{\mathbb{S}^{d-1}} J(\theta_t)\| = 0 \quad \text{almost surely}.
\end{equation}
This implies that there exists a subsequence of iterates $\{\theta_{t_k}\}$ such that $\|\operatorname{grad}_{\mathbb{S}^{d-1}} J(\theta_{t_k})\| \to 0$.
\end{theorem}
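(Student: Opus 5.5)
The argument is the standard nonconvex SGD analysis, run on the unconstrained parameter $\mathbf{w}$ and then translated to the sphere through the norm identity \eqref{eq:norm_relation}.

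\textbf{Step 1 (one-step descent).} By A1 every iterate lies in $\mathcal{D}$, so the $L$-smoothness of A2 applies along the segment between $\mathbf{w}_t$ and $\mathbf{w}_{t+1}=\mathbf{w}_t-\eta_t g_t$. One subtlety is that this segment must itself stay in $\mathcal{D}$. I would handle it by reading A2 as a global descent-lemma bound, or by enlarging $\mathcal{D}$ slightly, noting that the annulus is not convex and that steps are small for large $t$. The descent lemma then gives
\begin{equation*}
J(\mathbf{w}_{t+1}) \le J(\mathbf{w}_t) - \eta_t \nabla J(\mathbf{w}_t)^\top g_t + \tfrac{L\eta_t^2}{2}\|g_t\|^2 .
\end{equation*}

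\textbf{Step 2 (conditional expectation).} Take conditional expectations given $\mathcal{F}_t$ and use A3. This gives $\mathbb{E}[\nabla J^\top g_t\mid\mathcal{F}_t]=\|\nabla J(\mathbf{w}_t)\|^2$ and $\mathbb{E}[\|g_t\|^2\mid\mathcal{F}_t]\le \|\nabla J(\mathbf{w}_t)\|^2+\sigma^2$, hence
\begin{equation*}
\mathbb{E}[J(\mathbf{w}_{t+1})\mid\mathcal{F}_t] \le J(\mathbf{w}_t) - \eta_t\bigl(1-\tfrac{L\eta_t}{2}\bigr)\|\nabla J(\mathbf{w}_t)\|^2 + \tfrac{L\sigma^2}{2}\eta_t^2 .
\end{equation*}
Since $\eta_t\to0$, we have $1-L\eta_t/2\ge 1/2$ for all large $t$.

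\textbf{Step 3 (supermartingale convergence).} Apply the Robbins--Siegmund theorem to the nonnegative sequence $J(\mathbf{w}_t)-J^*$, which is nonnegative by A4. The perturbation term $\tfrac{L\sigma^2}{2}\eta_t^2$ is summable by $\sum\eta_t^2<\infty$. The theorem yields two conclusions almost surely:
\begin{itemize}
\item $J(\mathbf{w}_t)$ converges;
\item $\sum_t \eta_t\|\nabla_{\mathbf{w}} J(\mathbf{w}_t)\|^2<\infty$.
\end{itemize}
Because $\sum\eta_t=\infty$, the second conclusion forces $\liminf_t\|\nabla_{\mathbf{w}}J(\mathbf{w}_t)\|=0$ a.s. Otherwise there would be some $\epsilon>0$ with $\|\nabla_{\mathbf{w}}J(\mathbf{w}_t)\|^2\ge\epsilon$ eventually, making the sum diverge.

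\textbf{Step 4 (transfer to the sphere).} By \eqref{eq:norm_relation} and A1,
\begin{equation*}
\|\operatorname{grad}_{\mathbb{S}^{d-1}}J(\theta_t)\| = \|\mathbf{w}_t\|\,\|\nabla_{\mathbf{w}}J(\mathbf{w}_t)\| \le M\|\nabla_{\mathbf{w}}J(\mathbf{w}_t)\| .
\end{equation*}
So the liminf of the Riemannian gradient norm is also $0$ almost surely. The subsequence statement follows directly from the definition of $\liminf$.

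\textbf{Main obstacle.} The step needing the most care is Step 1's reliance on smoothness along the SGD segment inside the nonconvex annulus. I would state explicitly that A1–A2 are interpreted so that the descent lemma holds for consecutive iterates, for instance by assuming $L$-smoothness on a convex neighborhood containing the trajectory. The rest is routine.
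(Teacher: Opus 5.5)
Your proposal is correct and follows the paper's proof essentially step for step: the descent lemma from A2, conditional expectation using A3, Robbins--Siegmund with $\sum\eta_t^2<\infty$, then the divergence of $\sum\eta_t$ to get $\liminf\|\nabla_{\mathbf{w}}J(\mathbf{w}_t)\|=0$, and finally the transfer to the sphere via Eq.~\eqref{eq:norm_relation} with the bound $M$ from A1. Your version is slightly more careful in two places: you get almost-sure summability of $\sum_t\eta_t\|\nabla_{\mathbf{w}}J(\mathbf{w}_t)\|^2$ directly from Robbins--Siegmund, whereas the paper states summability in expectation and passes to the a.s.\ claim implicitly; and you flag that the annulus $\mathcal{D}$ is non-convex in the descent-lemma step, which the paper does not address.
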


\begin{proof}
Since $J(\mathbf{w})$ is $L$-smooth on the bounded domain (A1, A2), we invoke the standard non-convex SGD convergence analysis. From the $L$-smoothness inequality:
\begin{equation}
    J(\mathbf{w}_{t+1}) \le J(\mathbf{w}_t) + \langle \nabla J(\mathbf{w}_t), -\eta_t g_t \rangle + \frac{L}{2} \eta_t^2 \|g_t\|^2
\end{equation}
Taking the expectation conditioned on $\mathcal{F}_t$ and using A3:
\begin{equation}
    \mathbb{E}[J(\mathbf{w}_{t+1}) | \mathcal{F}_t] \le J(\mathbf{w}_t) - \eta_t \|\nabla_{\mathbf{w}} J(\mathbf{w}_t)\|^2 + \frac{L}{2} \eta_t^2 (\sigma^2 + \|\nabla_{\mathbf{w}} J(\mathbf{w}_t)\|^2)
\end{equation}
Rearranging the terms:
\begin{equation}
    \mathbb{E}[J(\mathbf{w}_{t+1}) | \mathcal{F}_t] \le J(\mathbf{w}_t) - \eta_t \left(1 - \frac{L}{2}\eta_t\right) \|\nabla_{\mathbf{w}} J(\mathbf{w}_t)\|^2 + \frac{L\sigma^2}{2} \eta_t^2
\end{equation}
Since $\eta_t \to 0$, there exists $t_0$ such that for all $t \ge t_0$, $1 - \frac{L}{2}\eta_t \ge \frac{1}{2}$. Summing over $t$, taking total expectation, and utilizing the lower bound (A4) and step size conditions ($\sum \eta_t^2 < \infty$), we apply the Robbins-Siegmund theorem (or standard stochastic approximation arguments) to conclude that the squared gradient norm is summable:
\begin{equation}
    \sum_{t=t_0}^{\infty} \frac{\eta_t}{2} \mathbb{E}[\|\nabla_{\mathbf{w}} J(\mathbf{w}_t)\|^2] < \infty
\end{equation}
Since $\sum \eta_t = \infty$, the gradient norm sequence cannot stay bounded away from zero. Thus:
\begin{equation}
    \liminf_{t \to \infty} \|\nabla_{\mathbf{w}} J(\mathbf{w}_t)\| = 0 \quad \text{a.s.}
\end{equation}
By definition of limit inferior, there exists a subsequence $\{t_k\}$ such that $\lim_{k \to \infty} \|\nabla_{\mathbf{w}} J(\mathbf{w}_{t_k})\| = 0$.
Using the norm relationship (Eq. \ref{eq:norm_relation}) and the Bounded Norm assumption (A1), we have:
\begin{equation}
    \|\operatorname{grad}_{\mathbb{S}^{d-1}} J(\theta_{t_k})\| = \|\mathbf{w}_{t_k}\|_2 \cdot \|\nabla_{\mathbf{w}} J(\mathbf{w}_{t_k})\| \le M \cdot \|\nabla_{\mathbf{w}} J(\mathbf{w}_{t_k})\|
\end{equation}
Taking the limit as $k \to \infty$:
\begin{equation}
    \lim_{k \to \infty} \|\operatorname{grad}_{\mathbb{S}^{d-1}} J(\theta_{t_k})\| \le M \cdot 0 = 0
\end{equation}
Thus, a subsequence of the Riemannian gradients converges to zero, which proves $\liminf_{t \to \infty} \|\operatorname{grad}_{\mathbb{S}^{d-1}} J(\theta_t)\| = 0$.
\end{proof}

\section{Baseline Adaptation for Session Context}
\label{app:baseline}
 To strictly align the baselines with our session-based search scenario (utilizing both query $q$ and user history $H$), we applied the following adaptation strategies:

Sparse Retrieval (BM25, DocT5Query): We employed a context-aware query expansion strategy. Specifically, we constructed an enriched query string by concatenating the current query text with the titles of the user's historically clicked items.

Dense Retrieval (DSSM-T5, Sentence-T5): We modified the input to the query encoder. Instead of feeding only the current query, we concatenated the query and history titles into a composite text sequence, enabling the T5 encoder to generate a context-aware dense representation.

Generative Retrieval (UniSearch, TIGER): For UniSearch, we utilized a prompt engineering approach where the query and history are concatenated into a unified text prompt. For TIGER, we mapped the current query text into its corresponding Semantic ID (SID) sequence and appended it to the history item SID sequence, creating an extended ID context for autoregressive prediction.

DSI (Adapted with Reranking): Standard DSI lacks a mechanism to directly encode long interaction histories during generation. To prevent underestimating its performance, we enhanced DSI with a two-stage "Generate-then-Rerank" framework. The model first generates candidate document cluster IDs based on the query (Recall Stage). Then, a dense reranker (utilizing the T5 encoder) takes the full context (Query + History) to re-score the candidates (Reranking Stage).

\section{Inference Efficiency Analysis}
\label{app:efficiency}

We conduct a comprehensive evaluation of the practical serving efficiency of DGI compared to standard Dense Retrieval (DR) pipelines. All benchmarks are performed on the AE-PV dataset using a single NVIDIA A100 (80GB) GPU.

DGI achieves an average end-to-end latency of \textbf{67.34 ms} per query with a beam size of 10. The process comprises three sequential stages:
\begin{enumerate}
    \item \textbf{Context Encoding ($\sim$12 ms):} The T5-based encoder processes the concatenated query-history sequence. This step is parallelizable and benefits significantly from batching.
    \item \textbf{Generative Hypersphere Search ($\sim$52 ms):} This is the computational bottleneck. As an auto-regressive process, the latency scales linearly with the code length $L$ and the beam size $B$. However, unlike DR, this cost is corpus- independent. It does not grow as the item corpus size $|\mathcal{I}|$ increases.
    \item \textbf{Dense Geometric Verification ($\sim$3 ms):} Calculating the Scaled Cosine Similarity between the query vector and the decoded item embeddings is computationally negligible, involving only highly optimized matrix multiplications.
\end{enumerate}

\section{Limitations and Future Work}
While DGI demonstrates superior performance and theoretical soundness, we acknowledge certain limitations that open avenues for future research:

\textbf{Dependence on Content Quality:}
DGI relies on the initial item encoder to initialize the semantic space. If the raw item features contain severe noise or ambiguity, the ``Semantic Alignment'' enforced by our Weight Sharing mechanism might propagate this noise into the retrieval logic. Investigating robust pre-training objectives that are less sensitive to input noise is a promising direction.

\textbf{Static Codebook Capacity:}
Currently, the codebook size $K$ and depth $m$ are fixed hyperparameters. In streaming scenarios with rapidly growing item corpora, a fixed capacity might eventually saturate. Developing \textit{Dynamic DGI} that can adaptively expand the codebook or tree depth in an end-to-end manner remains an open challenge.

\end{document}